\documentclass{article}
\usepackage[utf8]{inputenc}
\usepackage[T1]{fontenc}

\usepackage[margin=1.2in]{geometry}
\usepackage{authblk}
\usepackage{amsmath}
\usepackage{amssymb}
\usepackage{amsthm}
\usepackage{mathtools}
\usepackage{hyperref}
\usepackage[capitalise]{cleveref}
\usepackage{doi}
\usepackage{tikz}
\usepackage{graphicx}

\usepackage[style=alphabetic, maxalphanames=4, maxbibnames=10]{biblatex}
\bibliography{main}

\newtheorem{theorem}{Theorem}
\newtheorem{corollary}[theorem]{Corollary}
\newtheorem{lemma}[theorem]{Lemma}
\newtheorem{proposition}{Proposition}
\newtheorem{definition}{Definition}

\DeclarePairedDelimiter{\ceil}{\lceil}{\rceil}

\makeatletter
\def\blfootnote{\gdef\@thefnmark{}\@footnotetext}
\makeatother

\title{On speeding up factoring with quantum SAT solvers}
\author[1]{Michele Mosca}
\author[2]{João Marcos Vensi Basso\thanks{Corresponding author: \href{mailto:Joao.Vensi_Basso@tufts.edu}{Joao.Vensi\_Basso@tufts.edu}}}
\author[3]{Sebastian R. Verschoor}

\affil[1,3]{Institute for Quantum Computing, University of Waterloo, Canada}
\affil[2]{Department of Physics and Astronomy, Tufts University, USA}
\affil[1]{Department of Combinatorics \& Optimization, University of Waterloo, Canada}
\affil[1]{Perimeter Institute for Theoretical Physics, Waterloo, Canada}
\affil[1]{evolutionQ Inc., Waterloo, Canada}
\affil[3]{David R. Cheriton School of Computer Science, University of Waterloo, Canada}
\date{\today}

\begin{document}

\maketitle
\blfootnote{Author list in alphabetical order; see~\url{https://www.ams.org/profession/leaders/culture/CultureStatement04.pdf}.}

\begin{abstract}
    There have been several efforts to apply quantum SAT solving methods to factor large integers. While these methods may provide insight into quantum SAT solving, to date they have not led to a convincing path to integer factorization that is competitive with the best known classical method, the Number Field Sieve. Many of the techniques tried involved directly encoding multiplication to SAT or an equivalent NP-hard problem and looking for satisfying assignments of the variables representing the prime factors. The main challenge in these cases is that, to compete with the Number Field Sieve, the quantum SAT solver would need to be superpolynomially faster than classical SAT solvers. In this paper the use of SAT solvers is restricted to a smaller task related to factoring: finding smooth numbers, which is an essential step of the Number Field Sieve. We present a SAT circuit that can be given to quantum SAT solvers such as annealers in order to perform this step of factoring. If quantum SAT solvers achieve any speedup over classical brute-force search, then our factoring algorithm is faster than the classical NFS.
\end{abstract}

\section{Introduction}

Factoring integers by translating the problem directly into a satisfiability (SAT)
instance or any equivalent NP-hard problem does not appear to be efficient, even when quantum solvers are assumed to be able to achieve a quadratic speedup~\cite{mv19}.
More importantly, the strategy does not even appear to perform better than the best known
classical method: the Number Field Sieve (NFS)~\cite{nfs}.

A subroutine of the NFS is to search for $y$-smooth numbers of a particular form, where an integer is $y$-smooth if all of its prime factors are $\leq y$.
Using Grover's algorithm~\cite{grover96} this search can be done faster,
so that a speedup over the classical method is achieved~\cite{bbm17}.
Although the resulting algorithm runs in super-polynomial time (and is thus slower
than Shor's algorithm~\cite{shor94}), it requires asymptotically fewer logical qubits to implement.

We investigate the strategy of replacing Grover's search in the described low-resource algorithm
by translating the smooth detection process into a satisfiability instance to be evaluated by a SAT solver. While the low-resource algorithm in~\cite{bbm17} requires a fault-tolerant quantum computer, one can alternatively attempt to solve these SAT instances with any quantum SAT solving algorithm or heuristic, such as a quantum annealer. \emph{If} the quantum SAT solving heuristic achieves a speed-up over classical SAT solving algorithms, then we show that this leads to a factoring algorithm that is asymptotically faster than the regular NFS.
While there is no convincing evidence to date that non-fault-tolerant quantum SAT solvers will provide an asymptotic speed-up over classical SAT solvers, with this approach we at least avoid the situation where the quantum SAT solver must outperform classical SAT solvers by a superpolynomial factor in order to compete with the NFS.

We first demonstrate theoretically that some approaches to translating smoothness testing to a SAT instance are too expensive. In practice, one might hope that SAT solvers would be able to pick up on patterns of these specific circuits and to achieve potential speedups. Note that this does not appear to happen in the direct factoring strategy~\cite{mv19}, but there is the possibility that it would for the more specific problem of smooth number detection. After implementing one of the circuits, however, the benchmarks suggest that this is not the case.

We found one circuit implementing the Elliptic Curve Method (ECM)~\cite{len87} which, if used as a subroutine of the NFS, could result in a speedup for factoring integers
since
SAT solvers can use this circuit to find smooth numbers asymptotically as fast as brute-force search. Moreover, if quantum annealers or other SAT solvers achieve any speedup over such classical SAT solving, then our algorithm is faster than the classical one. In the optimistic case that quantum SAT solvers achieve a full quadratic speedup, then our algorithm has the same time complexity as the low-resource quantum algorithm of~\cite{bbm17}. 

\subsection{Contributions of this paper}
We show in general that a few approaches for smoothness detection with SAT circuits are not enough to speed up the NFS. Moreover, we run benchmarks and find that a classical SAT solver does not appear to pick up on any patterns that allow one to claim otherwise for these approaches. Most importantly, we present a circuit that, when used as a NFS subroutine, yields an algorithm with the same asymptotic runtime as the classical NFS, and faster if quantum SAT solvers achieve any non-trivial speedup. In the optimistic case that a quantum SAT solver achieves a full quadratic speedup, the algorithm would be as fast as the low-resource quantum algorithm, while not necessarily requiring a fault-tolerant quantum computer to operate.

\subsection{Nomenclature}
We refer to the algorithm in~\cite{nfs} as the classical NFS, to the algorithm in~\cite{bbm17} as the low-resource quantum NFS, and to our algorithm that uses ECM for smoothness detection as circuit-NFS.

\subsection{Organization}
In \Cref{previous_work}, we review previous work related to SAT solving as well as factoring, including a factoring algorithm that encodes a multiplication circuit as a SAT instance whose solution represents the prime factors. We also recall work done to speed up the Number Field Sieve using Grover's search on a quantum computer with $(\log N)^{2/3+o(1)}$ logical qubits, where $N$ is the number being factored. In \Cref{circuits_section}, we present a few encodings of smoothness detection into circuits. We show in general that the SAT instances belonging to smoothness-detection circuits which have the prime exponents or the factors of the number being tested for smoothness as variables cannot be solved fast enough to speed up factoring. Most importantly, we present a circuit implementing the ECM, analyze it and make statements about the solver runtime relative to the speedup obtained by a quantum SAT solver. Lastly, in \Cref{conclusion}, we discuss the results of this paper as well as future work.

\section{Previous Work}\label{previous_work}
The work in~\cite{mv19} investigates the use of SAT solvers for factoring semi-primes, that is, numbers with only two primes factors of similar size. It encodes a multiplication circuit into a SAT instance, fixing the output as the number being factored and making the multiplicands variable. Therefore, solving such SAT instance is equivalent to factoring the semi-prime. The paper finds no evidence that this approach to factoring via classical SAT solvers provides any advantage, or even matches the classical NFS. It also points out that quantum SAT solvers are not expected to do much better if factoring is encoded as a SAT instance in this direct fashion.

The general number field sieve (NFS) improves on the special number field sieve~\cite{snfs} by removing any restrictions on the numbers that can be factored. The NFS algorithm is conjectured to factor any integer $N$ in $L_N[1/3, (64/9)^{1/3} + o(1)]$ time, where $L_x[a,b] = \exp{(b(\log x)^{a} (\log \log x)^{1-a})}$ and $o(1) \rightarrow 0$ as $N \rightarrow \infty$. Here we give a brief overview of the algorithm, highlighting the details relevant to the present paper. Note that the NFS is explained and analyzed in thorough detail in \cite{nfs}. For a simplified overview, see~\cite[Section 2]{bbm17}, whose notation we follow.

The algorithm takes in an integer $N$ to be factored and parameters $d, y, u$, with
\begin{itemize}
  \item $y \in L^{\beta + o(1)}$
  \item $u \in L^{\epsilon + o(1)}$
  \item $d \in (\delta + o(1))(\log N)^{1/3}(\log \log N)^{-1/3}$
\end{itemize}
where $N > 2^{d^2}$, $L = L_N[1/3,1]$ and $\beta, \delta, \epsilon$ are parameters to be optimized for in the analysis.
 Further, define
\begin{itemize}
  \item $m := \lfloor N^{1/d} \rfloor$
  \item $U := \{(a,b) \in \mathbb{Z}^2:  \text{gcd}\{a,b\}=1, |a| \leq u, 0 < b \leq u \}$
  \item $f(X) := \sum_{i=0}^{d}c_i X^i$ where the $c_i$ are obtained by writing $N$ in base $m$: $N = \sum_{i=0}^{d}c_i m^i$
  \item $\alpha$ such that $f(\alpha)=0$
  \item $g(a,b) := (-b)^d f(-a/b) = \sum_{i=0}^d c_i a^{i} (-b)^{d-i}$ 
  \item $F(a,b) := (a+bm)g(a,b)$
  \item $\phi : \mathbb{Z}\big[ \alpha \big] \rightarrow \mathbb{Z}/N\mathbb{Z} : \sum_i a_i \alpha^i \rightarrow \sum_i a_i m^i$, a homomorphism.
\end{itemize}

From the above, one can see that $d$ represents the degree of the polynomial $f$ and that $u$ is, in a sense, a bound on the search space $U$. Moreover, as explained below, $y$ is taken to be the smoothness bound on $F(a,b)$. $N$ is assumed to be odd. 

The NFS attempts to find a suitable set $S \subseteq U$ such that on the rational side
\begin{equation}\label{rational_sq}
    \prod_{(a,b) \in S} (a+bm) = X^2 \text{ is a square in } \mathbb{Z}
\end{equation}
and on the algebraic side
\begin{equation}\label{algebraic_sq}
    f'(\alpha)^2 \prod_{(a,b) \in S} (a+b\alpha) = \beta^2 \text{ is a square in } \mathbb{Z}\big[ \alpha \big].
\end{equation}
The algorithm then outputs
\begin{equation}
    \text{gcd}\{N, \phi(\beta) - f'(m)X\}.
\end{equation}

In order to find an appropriate $S$, the algorithm looks for a $T \subseteq U$ such that $T = \{(a,b) \in \mathbb{Z}^2 : \text{gcd}\{a, b\}=1,|a| \leq u, 0 < b \leq u, F(a,b) \text{ is } y \text{-smooth}\}$, with $\#T \in y^{1+o(1)}$. After $T$ is found, a linear dependence relation between the exponent vectors (reduced modulo 2) of $F(a,b)$ for $(a,b) \in T$  reveals a suitable set $S \subseteq T$ such that both \Cref{rational_sq} and \Cref{algebraic_sq} are satisfied.

The two main bottlenecks of NFS are to (i) find T and (ii) find the linear dependence relation. In the classical NFS (i) takes $L^{2\epsilon + o(1)}$ time, since that is the size of $U$, and (ii) takes $L^{2\beta + o(1)}$ with Wiedemann's algorithm~\cite{wie86}. By balancing both, one obtains a total runtime of $L^{1.923}$. The low-resource algorithm does (i) using Grover's search and yields a better runtime, namely $L^{1.387}$. Note as well that, if (ii) is assumed to take $L^{2.5\beta + o(1)}$ as in~\cite{ber01}, the classical NFS ends up with runtime $L^{1.976}$ and the low-resource algorithm with $L^{1.456}$. For completeness we repeat the derivations in \Cref{gamma_1_corollary} and \Cref{gamma_2_corolary}.

\section{Circuits for smoothness detection}\label{circuits_section}

The circuit SAT problem asks whether there exists an input for a given Boolean circuit, encoded as a SAT instance, such that the output will be TRUE.
For a satisfiable circuit SAT formula in $v$ variables one can easily find a solution with $v$ queries to a decision oracle for SAT. In practice, the best known algorithms for deciding SAT implicitly also provide a solution and thus the repeated applications of a SAT decision algorithm are not necessary.
Using binary encoding for integers we construct circuits that encode a predicate on numbers, so that solving
the corresponding SAT instance is a search for numbers satisfying the predicate.
From here on we refer to this process as ``solving the circuit''.

Instead of using Grover's search to look for $(a,b) \in T$ as in~\cite{bbm17}, we let a SAT solver find these using the encoded circuit.
In particular, we encode the predicate ``$F(a,b)$ is a $y$-smooth number'' on the input pair $(a, b)$,
while we assume the conditions $|a| \leq u$ and $0 < b \leq u$ are enforced by the input encoding. Similar to~\cite{bbm17}, we assume that the case $\gcd\{a,b\} > 1$ is handled by post-processing.

A naive algorithm for circuit SAT simply evaluates the full circuit for every possible input until a one is found at the output.
For a circuit with $v$ input variables and size $g$ this strategy has runtime $O(2^v g)$, which is
the runtime we assume for solving circuits.
Given that circuit SAT is an NP-complete problem, it is widely believed that no efficient algorithm exists.
However, in practice modern SAT solvers perform well on solving large SAT instances for certain problems,
so that the conjectured runtime requires some confirmation in the form of benchmark results.

In this section we analyze a few natural circuits for implementing the required predicate and prove the approach does not offer any improvement over the classical NFS. We show that, in general, circuits encoding all primes $p_i\leq y$ or the prime exponents $e_i$ can not be solved efficient enough. On the other hand, solving a circuit implementing the Elliptic Curve Method (ECM)~\cite{len87} is shown to achieve runtimes comparable to that of the classical NFS. We recall a few results important for the analysis.

\begin{lemma}\label{lemma-sizes}
\begin{enumerate}
    \item $|(a+bm)| \leq 2 u N ^{1/d}$ and $|g(a,b)| \leq (d+1) N ^{1/d} u^{d}$
    \item $\log |F(a,b)| \in O((\log N)^{2/3}(\log \log N)^{1/3})$
    \item $\log \log |F(a,b)| \in O(\log \log N)$
\end{enumerate}
\end{lemma}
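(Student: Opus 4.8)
The plan is to prove the three items in order, each building on the previous one, using only the explicit parameter choices from the setup and the triangle inequality; no part requires anything clever.

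For item 1, I would first bound $|a+bm|$. The input encoding enforces $|a|\le u$ and $0<b\le u$, and $m=\lfloor N^{1/d}\rfloor\le N^{1/d}$ with $N^{1/d}\ge 1$, so the triangle inequality gives $|a+bm|\le|a|+b\,|m|\le u+uN^{1/d}\le 2uN^{1/d}$. For $|g(a,b)|$ I would use $g(a,b)=\sum_{i=0}^d c_i a^i(-b)^{d-i}$. The $c_i$ are the base-$m$ digits of $N$, hence $0\le c_i<m\le N^{1/d}$; moreover $|a|^i\,|b|^{d-i}\le u^d$ since $\max\{|a|,b\}\le u$ and $u\ge1$. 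Summing the $d+1$ terms yields $|g(a,b)|\le(d+1)N^{1/d}u^d$.

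For item 2, multiply the two bounds from item 1: $|F(a,b)|=|a+bm|\cdot|g(a,b)|\le 2(d+1)u^{d+1}N^{2/d}$, hence
\[
\log|F(a,b)|\le\log\bigl(2(d+1)\bigr)+(d+1)\log u+\frac{2}{d}\log N .
\]
Now I substitute the parameter values. With $L=L_N[1/3,1]=\exp\bigl((\log N)^{1/3}(\log\log N)^{2/3}\bigr)$ and $u\in L^{\epsilon+o(1)}$ we get $\log u=(\epsilon+o(1))(\log N)^{1/3}(\log\log N)^{2/3}$; with $d\in(\delta+o(1))(\log N)^{1/3}(\log\log N)^{-1/3}$ the term $(d+1)\log u$ becomes $(\delta\epsilon+o(1))(\log N)^{2/3}(\log\log N)^{1/3}$ and $\frac{2}{d}\log N$ becomes $(2/\delta+o(1))(\log N)^{2/3}(\log\log N)^{1/3}$, while $\log\bigl(2(d+1)\bigr)\in O(\log\log N)$ is of strictly lower order. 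Adding these gives $\log|F(a,b)|\in O\bigl((\log N)^{2/3}(\log\log N)^{1/3}\bigr)$. For item 3 I take logarithms once more in the bound from item 2: writing the bound as $C(\log N)^{2/3}(\log\log N)^{1/3}$ we get $\log\log|F(a,b)|\le\log C+\tfrac23\log\log N+\tfrac13\log\log\log N\in O(\log\log N)$.

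I do not expect a real obstacle. The only point that needs care is the bookkeeping of the $o(1)$ terms coming from the definitions of $y,u,d$ and verifying that the $\log(d+1)$ contribution is indeed absorbed into the dominant $(\log N)^{2/3}(\log\log N)^{1/3}$ scale; a minor technical check is that $N>2^{d^2}$ forces $m\ge 2$, so the base-$m$ expansion of $N$ and the resulting bound $c_i<m$ are valid. Everything else is a direct computation.
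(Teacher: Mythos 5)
Your proposal is correct and follows essentially the same route as the paper: bound $|a+bm|$ and $|g(a,b)|$ directly from the definitions of $m$, the base-$m$ digits $c_i$, and the bounds $|a|,b\le u$; multiply to get $|F(a,b)|\le 2(d+1)u^{d+1}N^{2/d}$; then substitute the parameter choices for $d$ and $u$ to get the $(\log N)^{2/3}(\log\log N)^{1/3}$ scale, and take logs once more for item 3. You simply spell out the triangle-inequality and digit-bound details (and the $m\ge 2$ sanity check) that the paper leaves implicit.
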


\begin{proof}
\begin{enumerate}
    \item Follows directly from the definitions of $g$ and $m$.
    \item $\log |F(a,b)| \leq \log 2(d+1) + \frac{2\log N}{d} + (d+1) \log u \in O(\frac{\log N}{d} + d \log u)$. Now $\frac{\log N}{d} + d \log u = (\epsilon \delta + \delta^{-1} + o(1))(\log N)^{2/3}(\log \log N)^{1/3} \subseteq O((\log N)^{2/3}(\log \log N)^{1/3})$.
    \item Taking logs of the expression above, the dominant term is $\log \log N$.
\end{enumerate}
\end{proof}

\begin{lemma}\label{lemma_omega_n}
If $F(a,b)$ is y-smooth, then $\Omega(F(a,b)) \in O((\log N)^{2/3}(\log \log N)^{1/3})$, where $\Omega(n)$ is the number of prime divisors of $n$ with multiplicity.
\end{lemma}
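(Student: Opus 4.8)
The plan is to reduce the statement to a one-line estimate together with \Cref{lemma-sizes}(2). The key observation is that every prime divisor of $F(a,b)$ is at least $2$, so the number of prime factors counted with multiplicity cannot exceed $\log_2 |F(a,b)|$. Concretely, writing $F(a,b)=\prod_i p_i^{e_i}$ with each $p_i\geq 2$ and $\Omega(F(a,b))=\sum_i e_i$, we get
\[
  |F(a,b)| \;=\; \prod_i p_i^{e_i} \;\geq\; \prod_i 2^{e_i} \;=\; 2^{\Omega(F(a,b))},
\]
and hence $\Omega(F(a,b)) \leq \log_2 |F(a,b)| = (\log 2)^{-1}\log|F(a,b)|$.

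The second and final step is to substitute the size bound already established: by \Cref{lemma-sizes}(2) we have $\log|F(a,b)| \in O((\log N)^{2/3}(\log\log N)^{1/3})$, and since multiplying by the constant $(\log 2)^{-1}$ does not change the asymptotic class, this gives $\Omega(F(a,b)) \in O((\log N)^{2/3}(\log\log N)^{1/3})$, as claimed. Note that the $y$-smoothness hypothesis is not strictly needed for the inequality $\Omega(n)\leq\log_2 n$, which holds for every positive integer $n$; it is stated because that is the regime in which the lemma is used later. If one did want to exploit smoothness, one could instead bound the number of \emph{distinct} primes by $\pi(y)$, but the multiplicity-aware bound above is the one required downstream, so I would present it directly.

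The only point needing a word of care is the degenerate case: $\Omega$ is zero for $F(a,b)=\pm 1$ and undefined for $F(a,b)=0$. Since the NFS only ever processes pairs $(a,b)$ with $F(a,b)\neq 0$, and the bound is trivially satisfied when $|F(a,b)|=1$, this causes no difficulty. There is no substantive obstacle here — the content is entirely carried by \Cref{lemma-sizes} — so I would keep the proof to the two displayed steps above.
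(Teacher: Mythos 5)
Your proof is correct and follows essentially the same route as the paper: every prime factor of $F(a,b)$ is at least $2$, so $\Omega(F(a,b)) \leq \log_2|F(a,b)|$, and \Cref{lemma-sizes} finishes the job. In fact your write-up states the inequality in the right direction, whereas the paper's proof contains a typo ($\Omega(F(a,b)) \geq \log_2 F(a,b)$ where $\leq$ is meant), so your version is, if anything, slightly cleaner.
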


\begin{proof}
All the prime factors of $F(a,b)$ are at least $2$, so that $\Omega(F(a,b)) \geq \log_2 F(a,b)$. The result follows from \Cref{lemma-sizes}.
\end{proof}

\subsection{Circuit with variable exponents}

A natural idea is to hard-code all the primes $p_i \leq y$ into the circuit (see \Cref{fig:fullproduct}), and let $a, b$ and $e_i$ be the variables, where $1 \leq i \leq \pi(y)$, and $\pi(x)$ counts the number of primes $\leq x$. A satisfying assignment finds the exponent $e_i$ for each prime $p_i$ that forms the factorization of $F(a,b)$:
\begin{equation}\label{eqn:smooth}
    F(a,b) = \prod_{i=1}^{\pi(y)}p_i^{e_i}
\end{equation}

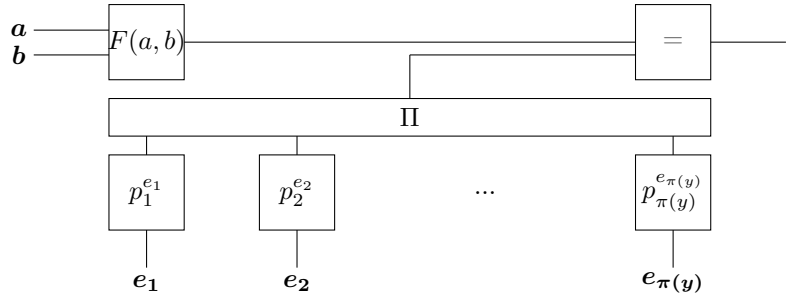
\begin{figure}[ht]
\centering
\begin{tikzpicture}
\draw (1,2.33) -- (2,2.33) node[pos=-0.2] {$\boldsymbol{b}$};
\draw (1,2.66) -- (2,2.66) node[pos=-0.2] {$\boldsymbol{a}$};
\draw (2,2) rectangle (3,3) node[pos=0.5] {$F(a,b)$};
\draw (3,2.5) -- (9,2.5);
\draw (9,2) rectangle (10,3) node[pos=0.5] {=};
\draw (10,2.5) -- (11,2.5);
\draw (2,1.25) rectangle (10,1.75) node[pos=0.5] {$\Pi$};
\draw (6,1.75) -- (6,2.33);
\draw (6,2.33) -- (9,2.33);
\draw (2.5,1) -- (2.5,1.25);
\draw (2,0) rectangle (3,1) node[pos=0.5] {$p_1^{e_1}$};
\draw (4.5,1) -- (4.5,1.25);
\draw (4,0) rectangle (5,1) node[pos=0.5] {$p_2^{e_2}$};
\draw (9.5,1) -- (9.5,1.25);
\draw (9,0) rectangle (10,1) node[pos=0.5] {$p_{\pi(y)}^{e_{\pi(y)}}$};
\node at (7,0.5) {...};
\draw (2.5,0) -- (2.5,-0.5); \node at (2.5,-0.7) {$\boldsymbol{e_1}$};
\draw (4.5,0) -- (4.5,-0.5); \node at (4.5,-0.7) {$\boldsymbol{e_2}$};
\draw (9.5,0) -- (9.5,-0.5); \node at (9.5,-0.7) {$\boldsymbol{e_{\pi(y)}}$};
\end{tikzpicture}
\caption{Circuit directly encoding \Cref{eqn:smooth}. Variables are shown in boldface. The $\Pi$ gate outputs the product of all input values.}\label{fig:fullproduct}
\end{figure}

The circuit provides no improvement over the classical NFS. Indeed, the number of bits necessary to represent $\vec{e} = (e_1, e_2,...,e_{\pi(y)})$ is lower-bounded by $\pi(y) \in y^{1+o(1)}$, which implies that the time to solve the circuit is at least exponential in $L^{\beta +o(1)}$, much larger than the overall NFS complexity. This also proves the following.
\begin{proposition}\label{prop_var_exp}
Any circuit that has $\vec{e}$ as variable input to be found by an exponential-time SAT solver is not sufficient to speed up integer factorization.
\end{proposition}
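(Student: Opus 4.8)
\textit{Proof proposal.} The plan is simply to lower-bound the number of variable bits that such a circuit must carry and then compare the resulting cost of an exponential-time solver against the subexponential runtime of the classical NFS. Recall the cost model adopted above: solving a circuit on $v$ input variables and of size $g$ costs $\Theta(2^v g)$ (brute force over all inputs), and, more generally, an ``exponential-time'' SAT solver has worst-case cost at least $2^{cv}$ for some constant $c>0$, independent of $g$. So the only thing that needs establishing is that $v$ is large.

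First I would count the variable bits. By hypothesis the circuit takes the whole exponent vector $\vec{e}=(e_1,\dots,e_{\pi(y)})$ as its variable input, with at least one bit per coordinate, so $v\ge\pi(y)$; in fact, since by \Cref{lemma-sizes} each $e_i\le\log_2|F(a,b)|\in O((\log N)^{2/3}(\log\log N)^{1/3})$, each coordinate occupies $\Theta(\log\log N)$ bits and $v=\pi(y)\cdot\Theta(\log\log N)$, but the crude bound $v\ge\pi(y)$ already suffices. Next I would estimate $\pi(y)$: since $y\in L^{\beta+o(1)}$ with $\beta>0$ a constant, the quantity $\ln y=(\beta+o(1))(\log N)^{1/3}(\log\log N)^{2/3}$ is only polylogarithmic in $N$, so the prime number theorem gives $\pi(y)=(1+o(1))\,y/\ln y\in y^{1+o(1)}=L^{\beta+o(1)}$; in particular $\pi(y)=\exp\!\big((\beta+o(1))(\log N)^{1/3}(\log\log N)^{2/3}\big)$ is superpolynomial in $\log N$. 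Combining the two estimates, the circuit-solving step alone costs at least $2^{c\,\pi(y)}=\exp\!\big(\Omega(L^{\beta+o(1)})\big)$, i.e. it is \emph{doubly} exponential in $(\log N)^{1/3}(\log\log N)^{2/3}$, whereas the entire classical NFS runs in $L^{O(1)}=\exp\!\big(O((\log N)^{1/3}(\log\log N)^{2/3})\big)$, only \emph{singly} exponential in that quantity. Since any circuit-NFS using this circuit must pay the circuit-solving cost inside its smoothness subroutine, the resulting factoring algorithm is asymptotically vastly slower than the classical NFS, hence cannot speed it up.

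There is no deep obstacle here; the statement is essentially a bookkeeping argument, and the only points requiring care are (i) fixing the cost model so that ``exponential-time solver'' yields a genuine $2^{\Omega(v)}$ lower bound rather than merely an upper bound, and (ii) handling the $L$-notation cleanly — in particular noting that $\pi(y)\in y^{1+o(1)}$ and that ``doubly exponential beats singly exponential'' holds for \emph{every} admissible choice of the NFS parameters $d,y,u$, since for the sieve to produce enough relations $\beta$ must be a fixed positive constant and so $\pi(y)$ cannot be made polylogarithmic. A reader wanting a fully quantitative statement could track the explicit value of $\beta$ coming from the optimization recalled in \Cref{previous_work}, but this is not needed for the asymptotic claim.
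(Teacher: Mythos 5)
Your proposal is correct and follows essentially the same route as the paper: lower-bound the number of input variables by $\pi(y) \in y^{1+o(1)} = L^{\beta+o(1)}$, observe that an exponential-time solver then costs time exponential in $L^{\beta+o(1)}$ (doubly exponential in $(\log N)^{1/3}(\log\log N)^{2/3}$), and conclude this dwarfs the NFS runtime $L^{O(1)}$. The extra details you supply (the prime number theorem estimate, the per-coordinate bit count, and the remark that $\beta$ must be a positive constant) are consistent elaborations of the paper's one-line argument.
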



Despite the theoretical result above, one might hope that SAT solvers are able to pick up on specific patterns of this circuit and exploit them to improve the overall runtime.
In order to investigate this possibility, we encoded this circuit into a satisfiability instance and ran benchmarks using MapleCOMSPS~\cite{maplecomsps}.

A circuit is generated for each number $N$, with all other parameters generated as described in \Cref{previous_work} and by setting $o(1)=0$.
In order to keep the circuit from growing too large, intermediate values
in the computation of $\prod p_i^{e_i}$ are truncated to $\log_2 F(u,u)$
bits and multiplication is computed by schoolbook multiplication. 
Despite these techniques the SAT instances can grow large: on the tested range they contain up to eighty thousand variables after simplification.
This is partially explained by the fact that $F$ (both the bound $F(u,u)$ and the found values $F(a,b)$) is much larger than $N$ for these small values of $N$.
With the used parameters the desired $F(u,u) < N$ will only occur for 140 bit values of $N$ and greater.
All code for generating circuits (including tests for correctness), benchmarks and measurements is made available online~\cite{repo}.

\Cref{varexp_bm} shows the benchmarking results. For each $N \leq 2^{18}$ we measured the median time of solving the same instance many times, for larger $N$ we report the solver runtime directly. Each measured runtime is multiplied by $y(N)$.

Since there are many $(a,b)$ that satisfy the predicate, we could run the solver many times to find multiple $(a,b) \in T$. 
Closer inspection of our results indicate that the SAT solver does indeed find many valid pairs.
If collisions are a problem, we could arbitrarily partition the search space
by putting restrictions on the input and have multiple solvers work in parallel. Alternatively we could
encode the negation of found solutions as a new SAT clause.
Determining which approach is best is left as an open question,
but here we assume that finding $y(N)$ solutions takes $y(N)$ times the resources as finding one solution.

Given the asymptotic behaviour displayed in \Cref{varexp_bm} it appears that the conjectured runtime $O(2^y)$ accurately describes the actual runtime of the SAT solver for finding smooth numbers using the variable exponents circuit.
Although this is not a statement about quantum SAT solvers, it is one more argument supporting the lack of speedup attributable to the SAT solver learning specific structures of this problem.

\begin{figure}[ht]
    \centering
    \includegraphics[width=.75\textwidth]{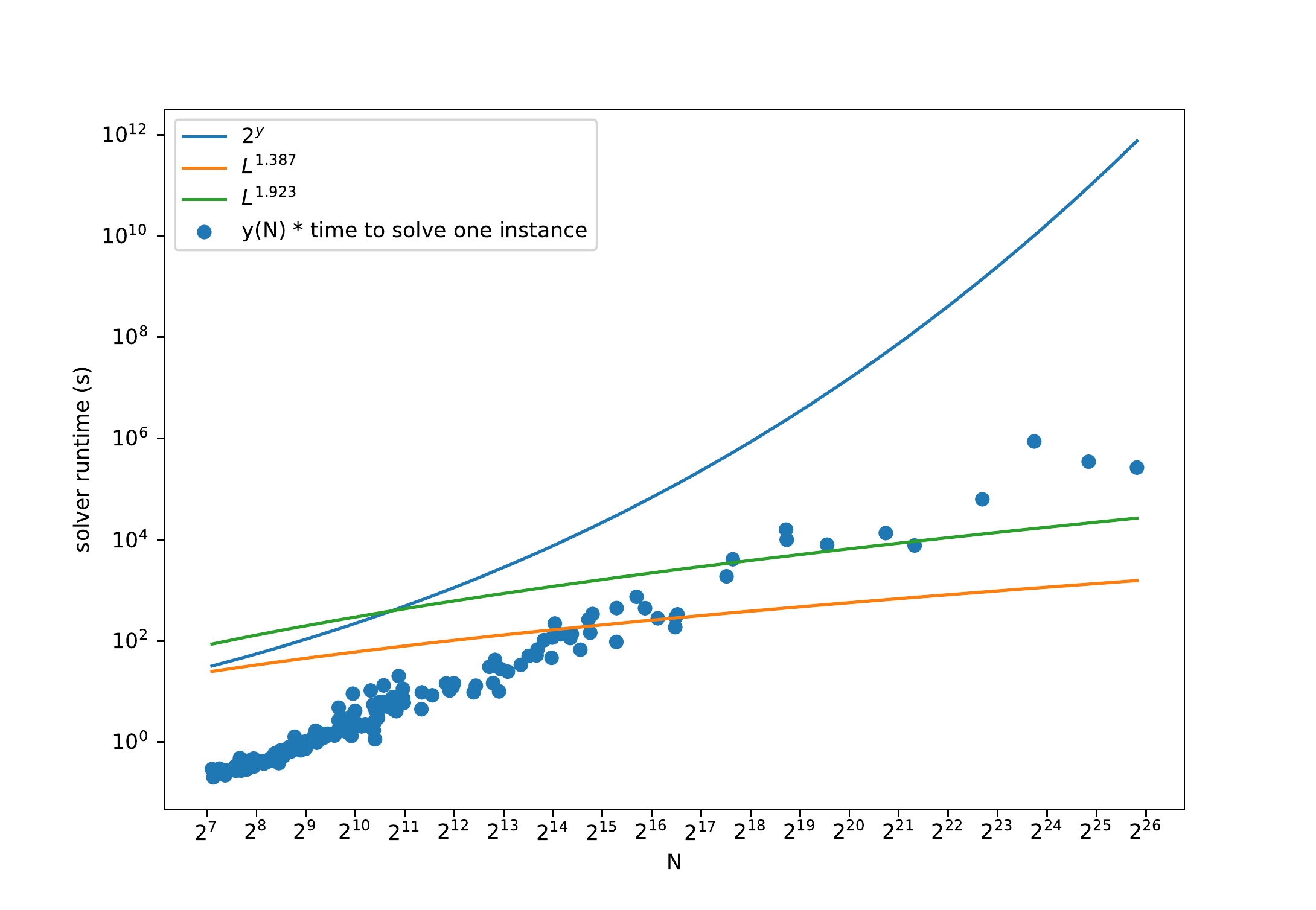}
    \caption{Scaling of solving times for the variable exponent circuit}
    \label{varexp_bm}
\end{figure}

\subsection{Circuit with variable factors}

Exploiting the small number of prime factors of $F(a,b)$ following from \Cref{lemma_omega_n}, one can hope to turn the factors into variables (see \Cref{fig:varprime}). At the end, the factors $q_i$ must multiply to $F(a,b)$. Note that the $q_i$ need not be prime, but only $\leq y$. This restriction could be enforced at no cost by allowing at most $\ceil{\log_2 y}$ bits to encode each $q_i$ or by an efficient test on each input.

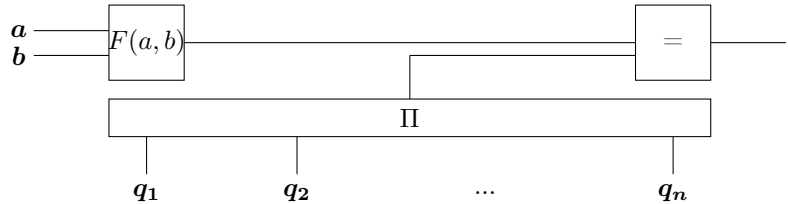
\begin{figure}[ht]
\centering
\begin{tikzpicture}
\draw (1,2.33) -- (2,2.33) node[pos=-0.2] {$\boldsymbol{b}$};
\draw (1,2.66) -- (2,2.66) node[pos=-0.2] {$\boldsymbol{a}$};
\draw (2,2) rectangle (3,3) node[pos=0.5] {$F(a,b)$};
\draw (3,2.5) -- (9,2.5);
\draw (9,2) rectangle (10,3) node[pos=0.5] {=};
\draw (10,2.5) -- (11,2.5);
\draw (2,1.25) rectangle (10,1.75) node[pos=0.5] {$\Pi$};
\draw (6,1.75) -- (6,2.33);
\draw (6,2.33) -- (9,2.33);
\node at (7,0.5) {...};
\draw (2.5,1.25) -- (2.5,0.75); \node at (2.5,0.5) {$\boldsymbol{q_1}$};
\draw (4.5,1.25) -- (4.5,0.75); \node at (4.5,0.5) {$\boldsymbol{q_2}$};
\draw (9.5,1.25) -- (9.5,0.75); \node at (9.5,0.5) {$\boldsymbol{q_n}$};
\end{tikzpicture}
\caption{Circuit with variable factors. Variables are shown in boldface. The $\Pi$ gate outputs the multiplication of all input values.}\label{fig:varprime}
\end{figure}

However, this strategy is too costly. That is, the number of variables in the circuit is $2 \log u + \sum_{i}\log q_i > \log \prod_{i}q_i$. In the very best case that the $q_i$ are encoded with the exact number of necessary bits, which is $\log F(a,b)$, then by \Cref{lemma-sizes}, results in $L_N[2/3, \cdot]$ time to solve the circuit. This also implies the following.

\begin{proposition}
If $\prod_i q_i = F(a,b)$, any circuit that has the $q_i$ as variables to be found by an exponential-time SAT solver is not sufficient to speed up integer factorization.
\end{proposition}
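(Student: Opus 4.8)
The plan is to lower-bound the number of input variables \(v\) of any such circuit and insert this into the brute-force running time \(O(2^v g)\) assumed throughout for solving a circuit of size \(g\) on \(v\) variables, mirroring the argument preceding \Cref{prop_var_exp}. First I would observe that in any satisfying assignment the registers holding \(q_1,\dots,q_n\) encode integers whose product is \(F(a,b)\); since \(\sum_i \log_2 q_i = \log_2\prod_i q_i = \log_2 F(a,b)\) regardless of how \(F(a,b)\) factors, representing these values takes at least \(\sum_i \ceil{\log_2(q_i+1)} \ge \log_2 F(a,b)\) bits, and the registers for \(a\) and \(b\) only add to this. Hence \(v \ge \log_2 F(a,b)\).

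Next I would pin down \(\log_2 F(a,b)\). Part~2 of \Cref{lemma-sizes} supplies the upper bound \(\log|F(a,b)| \in O((\log N)^{2/3}(\log\log N)^{1/3})\), but for a running-time lower bound I need the matching lower bound, which is immediate: every pair contributing to \(T\) has \(0 < b \le u\), \(|a|\le u\), and \(F(a,b)\neq 0\), and since \(u \in L^{\epsilon+o(1)}\) is negligible next to \(m = \lfloor N^{1/d}\rfloor\) we get \(|a+bm| \ge bm - |a| \ge m - u \ge m/2\) for all large \(N\), hence \(|F(a,b)| = |a+bm|\cdot|g(a,b)| \ge m/2\). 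With \(d \in (\delta+o(1))(\log N)^{1/3}(\log\log N)^{-1/3}\) this gives \(\log_2 F(a,b) \ge \tfrac1d\log_2 N - O(1) \in \Omega((\log N)^{2/3}(\log\log N)^{1/3})\), so \(v \in \Omega((\log N)^{2/3}(\log\log N)^{1/3})\).

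Finally I would translate this into the runtime claim. A single invocation of the circuit solver already costs \(2^v \ge 2^{\Omega((\log N)^{2/3}(\log\log N)^{1/3})} = L_N[2/3,\Omega(1)]\), and the NFS performs at least one such invocation (in fact \(y^{1+o(1)}\) of them to assemble \(T\)), so the smoothness-detection step alone runs in \(L_N[2/3,\cdot]\) time. For any positive constants \(c,c'\) one has \(L_N[2/3,c]/L_N[1/3,c'] = L_N[2/3, c-o(1)] \to \infty\), and this ratio dominates every polynomial in \(\log N\); hence the step is superpolynomially slower than the classical NFS running time \(L_N[1/3, 1.923]\) (and a fortiori slower than the low-resource quantum NFS), so the approach cannot speed up factoring. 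The same conclusion holds for any solver whose worst-case running time is exponential in its number of variables, which is the model adopted in this paper and the hypothesis of the proposition.

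I expect the only delicate point to be the lower bound on \(\log_2 F(a,b)\): \Cref{lemma-sizes} is stated only as an \(O(\cdot)\) estimate, so one must separately verify — as above — that \(F(a,b)\) genuinely has bit-length \(\Theta((\log N)^{2/3}(\log\log N)^{1/3})\) for the pairs of interest, which boils down to the dominant factor \(a+bm\) being large because \(u\) is negligible beside \(N^{1/d}\). The variable-count bound and the comparison of \(L\)-values are then routine.
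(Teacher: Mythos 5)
Your proof is correct and takes essentially the same route as the paper: the registers for the $q_i$ must total at least $\log_2 \prod_i q_i = \log_2 F(a,b)$ bits, so an exponential-time solver needs $L_N[2/3,\cdot]$ time, which is superpolynomially larger than the $L_N[1/3,\cdot]$ runtime of the NFS. Your explicit lower bound $|F(a,b)| \geq m/2$ (via $|a+bm| \geq m-u$ and $g(a,b)$ being a nonzero integer) just makes rigorous the bit-length estimate that the paper invokes only through the $O$-bound of \Cref{lemma-sizes}.
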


\subsection{ECM circuit}

The Elliptic Curve Method (ECM) is a factoring algorithm devised by Hendrik Lenstra \cite{len87}. One of its key features is that its runtime is conjectured to depend on the smallest prime factor of the number being factored, making it very suitable for smoothness detection. We create a circuit that executes repeated runs of the ECM to obtain prime factors $p_i \leq y$ of $F(a,b)$. For each prime obtained, repeated divisions are performed in order to eliminate that prime from the factorization. \Cref{fig:ecm} shows a simplified circuit. There are implicit operations such as checking if the obtained prime is $\leq y$ and only performing division when the remainder is zero. RAND represents a random choice of parameters for the ECM, more specifically $a, x, y$, using the notation in \cite[(2.5)]{len87}. Note that, for a given SAT instance, the random generator seeds are fixed.

This circuit meets the desirable time complexity by decreasing the number of variables significantly. Indeed, the only variables are $a,b$, so the search space is just $U$. The following theorem establishes the size and probability of success of the ECM circuit.

\begin{figure}[ht]
\centering
\begin{tikzpicture}
\draw (1,2.33) -- (1.5,2.33) node[pos=-0.2] {$\boldsymbol{b}$};
\draw (1,2.66) -- (1.5,2.66) node[pos=-0.2] {$\boldsymbol{a}$};
\draw (1.5,2) rectangle (2.5,3) node[pos=0.5] {$F(a,b)$};
\draw (2.5,2.5) -- (3,2.5);
\draw (2,3) -- (2,4);
\draw (2,4) -- (3,4);
\draw (3,2) rectangle (3.5,3) node[pos=0.5] {/};
\draw (3.5,2.5) -- (3.6,2.5);
\node at (3.75,2.5) {...};
\draw (3.9,2.5) -- (4,2.5);
\draw (4,2) rectangle (4.5,3) node[pos=0.5] {/};
\draw (3.25,3) -- (3.25,3.5);
\draw (4.25,3) -- (4.25,3.5);
\draw (3,3.5) rectangle (4.5,4.5) node[pos=0.5] {ECM};
\draw (3.75,4.5) -- (3.75,5); \node at (3.75,5.2) {RAND};
\draw (4.75,2.5) -- (4.75,4);
\draw (4.75,4) -- (5,4);
\draw (4.5,2.5) -- (5,2.5);
\draw (5,2) rectangle (5.5,3) node[pos=0.5] {/};
\draw (5.5,2.5) -- (5.6,2.5);
\node at (5.75,2.5) {...};
\draw (5.9,2.5) -- (6,2.5);
\draw (6,2) rectangle (6.5,3) node[pos=0.5] {/};
\draw (5.25,3) -- (5.25,3.5);
\draw (6.25,3) -- (6.25,3.5);
\draw (5,3.5) rectangle (6.5,4.5) node[pos=0.5] {ECM};
\draw (5.75,4.5) -- (5.75,5); \node at (5.75,5.2) {RAND};
\draw (6.5,2.5) -- (7,2.5);
\node at (7.4,2.5) {...};
\draw (7.75,2.5) -- (8.25,2.5);
\draw (8.25,2) rectangle (9.25,3) node[pos=0.5] {$=1$?};
\draw (9.25,2.5) -- (9.75,2.5);
\end{tikzpicture}
\caption{Circuit implementing the Elliptic Curve Method (ECM). Variables are shown in boldface. RAND stands for a source of randomness for the parameters of the ECM.}
\label{fig:ecm}
\end{figure}
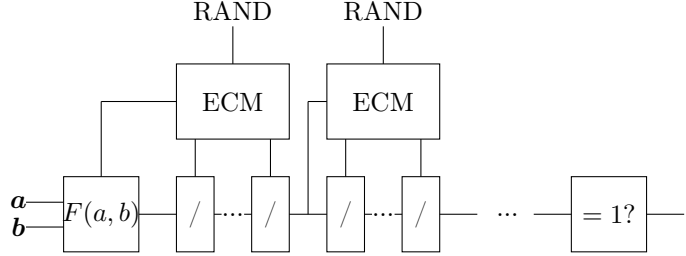

\begin{theorem}
The ECM circuit can be designed to have size upper-bounded by $L_N[1/6, \sqrt{2\beta / 3} + o(1)]$ and probability of success $1 - o(1)$.
\end{theorem}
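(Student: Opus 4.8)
The plan is to build the circuit in three stages — an input block that computes $F(a,b)$ from the pair $(a,b)$, a fixed-length pipeline of ``ECM run followed by trial divisions'' blocks (each block carrying its own hard-wired random seed for the curve parameters), and a final comparator testing whether the running cofactor has been reduced to $1$ — and to declare the output TRUE exactly when that cofactor is $1$. By \Cref{lemma-sizes} every integer manipulated anywhere in the circuit has $O((\log N)^{2/3}(\log\log N)^{1/3})$ bits, so each elementary gate (an addition, a schoolbook multiplication, a division-with-remainder, a comparison, or one application of the elliptic-curve group law modulo the cofactor) has size $\operatorname{poly}(\log N)$. Since $\operatorname{poly}(\log N)=\exp(O(\log\log N))$ and $\log\log N = o\big((\log N)^{1/6}(\log\log N)^{5/6}\big)$, such factors are swallowed by the $o(1)$ in the exponent of any $L_N[1/6,\cdot]$ quantity, so the size estimate reduces to bounding the number of elliptic-curve group operations performed.

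I would then invoke the standard ECM heuristic. Fix the stage-1 smoothness bound $B_1$. A single curve recovers a prime $p\mid F(a,b)$ as soon as the group order $\#E(\mathbb{F}_p)$, which lies in $[p+1-2\sqrt p,\,p+1+2\sqrt p]$, is $B_1$-smooth; modelling $\#E(\mathbb{F}_p)$ as a uniform integer of that size, this has probability $\rho(\log p/\log B_1)$, while stage~1 costs $\widetilde{O}(B_1)$ group operations. Optimising $B_1$ balances the cost $B_1$ against the reciprocal probability and gives $B_1=L_p[1/2,1/\sqrt2]$ with per-prime cost $L_p[1/2,\sqrt2+o(1)]$; since we only know $p\le y$ we instead fix $B_1:=L_y[1/2,1/\sqrt2]$ uniformly, so every prime $p\le y$ dividing $F(a,b)$ is caught by one curve with probability at least $L_y[1/2,1/\sqrt2+o(1)]^{-1}$. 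Running $m:=L_y[1/2,1/\sqrt2+o(1)]$ curves per prime — with the $o(1)$ taken to decay slowly enough that $m$ times the per-curve probability grows faster than any power of $\log N$ — then finds a given prime except with probability $o(1/\operatorname{poly}(\log N))$.

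The size bound is assembled as follows. By \Cref{lemma_omega_n}, $F(a,b)$ has at most $\Omega(F(a,b))=O((\log N)^{2/3}(\log\log N)^{1/3})$ prime factors with multiplicity, hence at most that many distinct primes and at most that many divisions to clear each; allotting $m$ ECM blocks per distinct prime gives $m\cdot\operatorname{poly}(\log N)$ blocks, each of size $\widetilde{O}(B_1)\cdot\operatorname{poly}(\log N)$. The circuit size is therefore $m\cdot B_1\cdot\operatorname{poly}(\log N)=L_y[1/2,1/\sqrt2+o(1)]\cdot L_y[1/2,1/\sqrt2]\cdot\operatorname{poly}(\log N)=L_y[1/2,\sqrt2+o(1)]$, and a change of base using $\log y=(\beta+o(1))(\log N)^{1/3}(\log\log N)^{2/3}$ and $\log\log y=(\tfrac13+o(1))\log\log N$ turns this into $L_N[1/6,\sqrt2\cdot\sqrt{\beta/3}+o(1)]=L_N[1/6,\sqrt{2\beta/3}+o(1)]$, as claimed; the parameters $\epsilon,\delta$ enter only through the $\operatorname{poly}(\log N)$ overhead and are absorbed. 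For the success probability, a union bound over the $\operatorname{poly}(\log N)$ prime factors shows that with probability $1-o(1)$ all of them are extracted, so a $y$-smooth $F(a,b)$ is reduced to $1$ and the circuit accepts; and a non-$y$-smooth $F(a,b)$ retains a prime factor $>y$ that is never divided out (the implicit ``$\le y$'' test discards it even if ECM surfaces it), so the cofactor stays above $1$ and the circuit rejects — there are no false positives. Hence the circuit is correct with probability $1-o(1)$.

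The step I expect to be the main obstacle is the bookkeeping around the ECM heuristics: making the choice of $B_1$ and the number of curves precise enough that (i) all the $\exp\big(o(\sqrt{\log y\log\log y})\big)$ factors arising from the $o(1)$'s in the smoothness probability, together with the $\operatorname{poly}(\log N)$ factors from gate sizes, the number of primes, and the multiplicities, genuinely collapse into the single $o(1)$ in the exponent of $L_N[1/6,\cdot]$, and (ii) the balance $B_1=L_y[1/2,1/\sqrt2]$ is exactly the one that keeps the per-prime failure probability negligible after the union bound \emph{and} produces the constant $\sqrt{2\beta/3}$ — rather than something larger — after the change of base. Everything else (the circuit layout, the applications of \Cref{lemma-sizes} and \Cref{lemma_omega_n}, and the absence of false positives) is routine.
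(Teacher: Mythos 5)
Your proposal is correct and reaches the paper's bound by the same overall decomposition (compute $F(a,b)$, a fixed pipeline of seeded ECM blocks with trial divisions, a final ``$=1$'' test, with \Cref{lemma-sizes} and \Cref{lemma_omega_n} controlling word sizes and the number of blocks/divisions), but the probabilistic core is handled differently. The paper invokes Lenstra's result~\cite[(2.10)]{len87} as a black box: one ECM \emph{run} costs $K(p)M(n)$ with $K(y)\in L_y[1/2,\sqrt2+o(1)]$ and succeeds with constant probability $1-e^{-1}$, so only $\Omega(F(a,b))$ plus a slowly growing surplus $\Delta\in\Theta(\log\log N)$ of such macro-blocks are needed, and the success probability $1-o(1)$ comes from a binomial-tail (normal-approximation) concentration argument over the blocks. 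You instead unroll the ECM heuristic at the level of single curves: you fix the stage-one bound $B_1=L_y[1/2,1/\sqrt2]$, take per-curve success probability $L_y[1/2,1/\sqrt2+o(1)]^{-1}$, and amplify by running $L_y[1/2,1/\sqrt2+o(1)]$ curves per prime so that a union bound over the $\operatorname{poly}(\log N)$ primes gives $1-o(1)$; the product $m\cdot B_1=L_y[1/2,\sqrt2+o(1)]$ then reproduces exactly the paper's $K(y)$, and the base change to $L_N[1/6,\sqrt{2\beta/3}+o(1)]$ is the same computation. Your route is more self-contained (it re-derives the ECM cost/probability trade-off and makes the absence of false positives explicit, which the paper leaves implicit), at the price of relying directly on the smoothness heuristic for $\#E(\mathbb{F}_p)$ rather than on Lenstra's stated conjecture, and of a slightly looser ``$m$ blocks per prime'' bookkeeping in the sequential pipeline; to make that last step airtight you would either partition the blocks into $\operatorname{poly}(\log N)$ groups of $m$ curves each responsible for the next unextracted prime, or fall back on the same binomial concentration bound the paper uses. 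Both analyses yield the same circuit size and the same $1-o(1)$ success probability.
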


\begin{proof}
 From~\cite[(2.10)]{len87}, one run of the ECM, with appropriate choice of parameters, finds with probability at least $1-e^{-1}$ a non-trivial divisor of $n$ in time $K(p)M(n)$, where $p$ is the least prime divisor of $n$, $K(p) \in L_p[1/2,\sqrt{2}+o(1)]$ and $M(n) \in O((\log n)^{1+o(1)})$. It is uncertain that the found non-trivial divisor is the smallest prime dividing $n$, but in practical circumstances this will often be the case~\cite[(2.10)]{len87}. For our purposes the divisors are allowed to be any factor of $F(a,b)$, as long as it is $\leq y$.
 
 By \Cref{lemma_omega_n}, one can choose a constant $c$ so that $\Omega(F(a,b))\leq c (\log N)^{2/3}(\log \log N)^{1/3}$. However, we increase $c \rightarrow c + \Delta$, $\Delta > 0$, to allow for ECM runs to fail. If there are $B := (c + \Delta) (\log N)^{2/3}(\log \log N)^{1/3}$ ECM blocks, the probability of success is the probability of at least $\Omega(N)$ events out of $B$ succeeding. This can be seen as a binomial process with probability of success $p=1-\frac{1}{e}$. In the limit $N \rightarrow \infty \implies B \rightarrow \infty$, $Binomial(x;B,p) \rightarrow Normal(x; B p, B p (1-p))$. We seek
 
 \begin{align}
      Pr(x \geq \Omega(N)) & = \frac{1}{\sqrt{2 \pi B p (1-p)}} \int_{\Omega(N)}^{\infty}\exp{\left[-\frac{(x - B p)^2}{2 B p (1 - p)}\right]} dx \nonumber \\
      & = \frac{1}{2}\left[ 1 - \text{erf}\left(\frac{(\log N)^{2/3}(\log \log N)^{1/3}) (c - pc - p\Delta)}{\sqrt{2 (c+\Delta) (\log N)^{2/3}(\log \log N)^{1/3}) p (1-p)}}\right) \right]
 \end{align}

Note that if we let $\Delta \in O(1)$, that is, $\frac{\partial \Delta}{\partial N} = 0$, the circuit would not work, since $\lim_{N \rightarrow \infty} Pr(x \geq \Omega(N)) = 0$. However, if we let $\Delta = \Delta(N)$ such that $\lim_{N \rightarrow \infty} \Delta(N) = \infty$, then $\lim_{N \rightarrow \infty} Pr(x \geq \Omega(N)) = 1$, as desired. Hence choosing $\Delta \in \Theta(\log \log N)$ suffices and does not alter the final complexity.

Hence let the circuit repeat the ECM step $O((\log N)^{2/3}(\log \log N)^{4/3})$ times and perform at most $O((\log N)^{2/3}(\log \log N)^{1/3})$ divisions of an obtained prime, since this is the maximum power a prime factor can have in the factorization of $F(a,b)$, by \Cref{lemma-sizes}. Each ECM has a different run-time since the least prime $p$ changes and $n$ is subsequently divided by the discovered factors. For upper-bound estimations, however, one can fix $p=y$ and $n=N$. In order to estimate the size of the ECM block, one can multiply the time and space complexity. The former is $K(y) M(N)$ and the latter is estimated to be $O(\log N)$.
This yields a total circuit size of $O((\log N)^{2/3}(\log \log N)^{1/3})O((\log N)^{2/3}(\log \log N)^{4/3}) K(y) M(N) O(\log N) \subseteq L_N[1/6, \sqrt{2 \beta /3} + o(1)]$.
\end{proof}

In order to analyze the runtime of solving the ECM circuit to find smooth $F(a,b)$, we need the following.

\begin{definition}
If a search space $E$ has size $\#E$, an algorithm that is able to search through $E$ within time $O(\#E^{1/\gamma})$ is said to achieve a $\gamma$-speedup.
\end{definition}

For instance, Grover's search achieves a 2-speedup. The following establishes a generalization of the runtime analysis of NFS given in \cite{bbm17}.

\begin{theorem}\label{gamma_speed_up}
If an algorithm $A$ achieves a $\gamma$-speedup, for $\gamma>0$, and the linear algebra step in the NFS is assumed to take $L^{2\beta+o(1)}$, the NFS can use $A$ to run in time $L^{\sqrt[3]{\frac{32(\gamma+1)}{9\gamma^2}} + o(1)}$.
\end{theorem}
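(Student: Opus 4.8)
The plan is to follow the NFS runtime analysis of~\cite{bbm17} while carrying the speedup exponent $\gamma$ through symbolically. Recall the two bottlenecks: (i) assembling $T = \{(a,b)\in U : F(a,b)\text{ is }y\text{-smooth}\}$, of size $\#T \in y^{1+o(1)} = L^{\beta+o(1)}$, which we do with $A$, and (ii) the linear algebra, assumed to cost $L^{2\beta+o(1)}$. I would assemble three ingredients: the cost of step (i) in terms of the density of smooth values of $F$ on $U$; the constraint that $U$ is large enough to contain $T$; and then a short constrained optimization over the NFS parameters $\beta,\epsilon,\delta$.

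For the density: by \Cref{lemma-sizes} — using $|a+bm|\le 2uN^{1/d}$ and $|g(a,b)|\le(d+1)N^{1/d}u^{d}$, hence $|F(a,b)|\le 2(d+1)u^{d+1}N^{2/d}$ — one has $\log|F(a,b)| = (\delta\epsilon + 2\delta^{-1} + o(1))(\log N)^{2/3}(\log\log N)^{1/3}$. With $y = L^{\beta+o(1)}$ the quotient $u_0 := \log|F(a,b)|/\log y$ equals $\tfrac{\delta\epsilon + 2/\delta}{\beta}(\log N/\log\log N)^{1/3}(1+o(1))$, so $\log u_0 = (\tfrac13+o(1))\log\log N$. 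Feeding this into the Canfield--Erd\H{o}s--Pomerance (Dickman) estimate that a random integer near $x$ is $x^{1/u_0}$-smooth with probability $u_0^{-u_0(1+o(1))}$ — together with the standard NFS heuristic that $F(a,b)$ for $(a,b)\in U$ is $y$-smooth with the same likelihood as a random integer of its size — the smoothness probability of $F$ on $U$ is $\psi = L^{-\sigma+o(1)}$ with $\sigma = \tfrac{\delta\epsilon + 2/\delta}{3\beta}$. Since $\#U = u^{2+o(1)} = L^{2\epsilon+o(1)}$, for $T$ to have the required size we need $2\epsilon - \sigma \ge \beta$; enlarging $\epsilon$ only increases $\sigma$ and the cost below, so at the optimum this is tight, $2\epsilon = \beta + \sigma$.

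For the cost of step (i): since a fraction $\psi = L^{-\sigma+o(1)}$ of $(a,b)\in U$ give $y$-smooth $F(a,b)$, a satisfiable instance is solved after examining $\approx\psi^{-1}$ inputs (not $2^{v}=\#U$), so algorithm $A$ — achieving a $\gamma$-speedup over that effective search space of size $\psi^{-1}$ — locates one such pair in $\psi^{-1/\gamma} = L^{\sigma/\gamma+o(1)}$ time, and (using the earlier assumption that producing $\#T$ solutions costs $\#T$ times producing one) step (i) costs $L^{\beta+\sigma/\gamma+o(1)}$. This specializes correctly: at $\gamma=1$ it is $L^{\beta+\sigma} = L^{2\epsilon}$, the box scanned by the classical NFS, and at $\gamma=2$ it is $L^{\beta+\sigma/2}$, which equals the Grover count $\approx 2\sqrt{\#U\,\#T}$ of~\cite{bbm17}; the per-evaluation overhead of $A$ (e.g.\ the $L_N[1/6,\cdot]=L^{o(1)}$ circuit size of the previous theorem) is absorbed into the $o(1)$. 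Hence the total runtime exponent is $\max(\beta+\sigma/\gamma,\,2\beta)+o(1)$. Using $\sigma = 2\epsilon-\beta = \tfrac{\delta\epsilon+2/\delta}{3\beta}$ to eliminate $\epsilon$ gives $\epsilon = \tfrac{3\beta^2+2/\delta}{6\beta-\delta}$ and $\sigma = \tfrac{\beta\delta+4/\delta}{6\beta-\delta}$, so one minimizes $\max\!\bigl(\beta + \tfrac1\gamma\cdot\tfrac{\beta\delta+4/\delta}{6\beta-\delta},\ 2\beta\bigr)$ over $\beta,\delta>0$ with $\delta<6\beta$. Balancing the two arguments gives $(\gamma+1)\beta\delta + 4/\delta = 6\gamma\beta^2$; optimizing $\delta$ along this relation (set $d\beta/d\delta=0$) gives $\delta = 2/\sqrt{(\gamma+1)\beta}$; substituting back collapses this to $\beta^{3/2} = \tfrac{2\sqrt{\gamma+1}}{3\gamma}$, i.e.\ $\beta^3 = \tfrac{4(\gamma+1)}{9\gamma^2}$, so the runtime is $L^{2\beta+o(1)} = L^{\sqrt[3]{32(\gamma+1)/(9\gamma^2)}+o(1)}$ — which reduces to $(64/9)^{1/3}$ at $\gamma=1$ and $(8/3)^{1/3}\approx1.387$ at $\gamma=2$.

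The main obstacle I expect is bookkeeping rather than any single hard idea. The care points are: checking that the smoothness estimate is used inside its valid range and that the accompanying heuristic is exactly the one already invoked in~\cite{bbm17}; confirming the interior critical point above is the global minimum of $\max(\beta+\sigma/\gamma,\,2\beta)$ over the feasible region (routine monotonicity in each direction, plus $\delta<6\beta$, equivalently $\epsilon>0$, which holds since $9(\gamma+1)\beta^3 = 4(\gamma+1)^2/\gamma^2 > 1$); and verifying the NFS side conditions still hold for the chosen $\delta$, in particular $N > 2^{d^2}$, which is automatic because $d^2\in O((\log N)^{2/3})\subseteq o(\log N)$.
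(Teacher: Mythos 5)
Your proposal is correct and follows essentially the same route as the paper: the same smoothness-probability heuristic $L^{-(\delta\epsilon+2/\delta)/(3\beta)+o(1)}$ from~\cite{bbm17}, the same tight constraint $2\epsilon=\beta+\sigma$, the same balance of the search cost $L^{\beta+(2\epsilon-\beta)/\gamma}$ against $L^{2\beta}$ giving $\epsilon=\beta(\gamma+1)/2$, and the same optimization over $\delta$ yielding $\beta^3=4(\gamma+1)/(9\gamma^2)$ and runtime $L^{\sqrt[3]{32(\gamma+1)/(9\gamma^2)}+o(1)}$. The only cosmetic differences are that the paper derives the per-solution search cost by explicitly partitioning $U$ into $L^{\beta+o(1)}$ parts each holding $L^{o(1)}$ solutions (rather than your ``effective search space of size $\psi^{-1}$'' phrasing, which gives the same exponent), and it solves the resulting quadratic in $\beta$ explicitly before minimizing over $\delta$ instead of using implicit differentiation.
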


\begin{proof}
By \Cref{lemma-sizes}, $|F(a,b)| \leq 2(d+1)N^{2/d}u^{d+1}$. As shown in \cite[section 3]{bbm17}, a uniform random integer in $\big[1, 2(d+1)N^{2/d}u^{d+1}\big]$ has a smoothness probability of $L^{-(2/\delta + \delta \epsilon + o(1))/(3\beta)}$. We use the same heuristic and assume that this is also the smoothness probability of $F(a,b)$. Since there need to be $L^{\beta+o(1)}$ smooth $F(a,b)$ in the search space $U$ of size $\#U \in L^{2\epsilon+o(1)}$, we must have $2\epsilon \geq \beta + (2/\delta + \delta \epsilon)/(3\beta)$. Since the constants are positive, $\epsilon \big(2-\frac{\delta}{3\beta}\big) \geq \beta + \frac{2}{3\beta\delta}$ and $6\beta/\delta > 1$. With this relation, the smoothness probability becomes $L^{\beta - 2\epsilon + o(1)}$.

Now, as in \cite{bbm17}, we partition U in any systematic fashion into $L^{\beta+o(1)}$ parts of size $L^{2\epsilon-\beta+o(1)}$, each containing $L^{o(1)}$ smooth $F(a,b)$ with very high probability. Algorithm $A$ can search each part in $L^{(2\epsilon-\beta)/\gamma + o(1)}$ time, for a total time of $L^{2\epsilon/\gamma + \beta(1-1/\gamma) + o(1)}$.

When balancing against the linear algebra step of $L^{2\beta+o(1)}$ time, we obtain $\epsilon = \beta\big(\frac{\gamma+1}{2}\big)$. Hence $\beta = \frac{(\gamma+1)\delta + \sqrt{\delta^2 + 96\gamma/((\gamma+1)^2\delta)}}{12\gamma}$, since $\frac{(\gamma+1)\delta - \sqrt{\delta^2 + 96\gamma/((\gamma+1)^2\delta)}}{12\gamma}$ is negative. By minimizing this as a function of $\delta$, we obtain a minimum of $\beta = \sqrt[3]{\big(\frac{2}{3\gamma}\big)^2(\gamma+1)}$ given by $\delta = \sqrt[3]{\frac{12 \gamma}{(\gamma+1)^2}}$. Note that $6\beta/\delta = 2(1+\frac{1}{\gamma}) > 1$. This yields a final NFS runtime of $L^{\sqrt[3]{\frac{32(\gamma+1)}{9\gamma^2}}+o(1)}$.
\end{proof}

The following two corollaries are restatements of the results in \cite{bbm17}.

\begin{corollary}[\cite{bbm17}]\label{gamma_1_corollary}
The classical NFS runs in $L^{\sqrt[3]{64/9} + o(1)}$ time, where $\sqrt[3]{64/9} \approx 1.923$.
\end{corollary}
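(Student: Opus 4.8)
The plan is to specialize \Cref{gamma_speed_up} to the case $\gamma = 1$, since a classical brute-force search over a search space of size $\#E$ trivially runs in time $O(\#E^{1/1}) = O(\#E)$ and hence achieves a $1$-speedup in the sense of the preceding definition. This is exactly the regime of the classical NFS, where step (i) (finding $T$) is done by exhaustively scanning $U$.

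Concretely, I would substitute $\gamma = 1$ into the final runtime expression $L^{\sqrt[3]{\frac{32(\gamma+1)}{9\gamma^2}} + o(1)}$ from \Cref{gamma_speed_up}. The exponent becomes $\sqrt[3]{\frac{32 \cdot 2}{9 \cdot 1}} = \sqrt[3]{64/9}$, which matches the classical NFS runtime. I would also note, for completeness and to reconnect with the standard parametrization, that at $\gamma = 1$ the relation $\epsilon = \beta\left(\frac{\gamma+1}{2}\right)$ collapses to $\epsilon = \beta$ (so the sieving step $L^{2\epsilon}$ and the linear algebra step $L^{2\beta}$ are balanced to be equal, as in the classical analysis), the optimal $\delta = \sqrt[3]{\frac{12\gamma}{(\gamma+1)^2}} = \sqrt[3]{3}$, and the optimal $\beta = \sqrt[3]{\left(\frac{2}{3}\right)^2 \cdot 2} = \sqrt[3]{8/27} = \frac{2}{3}\sqrt[3]{1}$... — one should just check this arithmetic carefully rather than trusting the closed form blindly. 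The final numerical claim $\sqrt[3]{64/9} \approx 1.923$ is then a one-line calculation.

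There is essentially no obstacle here: the corollary is a direct instantiation, and the only thing to be careful about is confirming that the hypothesis of \Cref{gamma_speed_up} — namely that the linear algebra step takes $L^{2\beta + o(1)}$ — is indeed the assumption under which the classical figure $1.923$ is usually quoted (as opposed to the $L^{2.5\beta + o(1)}$ variant mentioned in \Cref{previous_work}, which gives $1.976$). Since the statement to be proved explicitly cites the $1.923$ figure, this is the right branch. So the proof is simply: \emph{Apply \Cref{gamma_speed_up} with $\gamma = 1$; brute-force search over a space of size $\#E$ runs in time $O(\#E)$, so it achieves a $1$-speedup, and the resulting exponent is $\sqrt[3]{32 \cdot 2 / 9} = \sqrt[3]{64/9} \approx 1.923$.}
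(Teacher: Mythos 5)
Your proposal is correct and takes essentially the same route as the paper, whose entire proof is simply ``Set $\gamma=1$ in \Cref{gamma_speed_up}.'' One minor slip in your side remark: at $\gamma=1$ the optimal $\beta$ is $\sqrt[3]{(2/3)^2\cdot 2}=\sqrt[3]{8/9}$, not $\sqrt[3]{8/27}$, but this does not affect the corollary's exponent $\sqrt[3]{64/9}\approx 1.923$.
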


\begin{proof}
Set $\gamma=1$ in \Cref{gamma_speed_up}.
\end{proof}

\begin{corollary}[\cite{bbm17}]\label{gamma_2_corolary}
The low-resource quantum algorithm runs in $L^{\sqrt[3]{8/3} + o(1)}$ time, where $\sqrt[3]{8/3} \approx 1.387$.
\end{corollary}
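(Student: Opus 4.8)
The plan is to reduce this directly to \Cref{gamma_speed_up}, exactly as was done for the classical case in \Cref{gamma_1_corollary}. The low-resource quantum algorithm of~\cite{bbm17} differs from the classical NFS only in how it handles bottleneck (i), the search for the set $T$ of pairs $(a,b)$ with $F(a,b)$ smooth: instead of classical brute-force enumeration over $U$, it uses Grover's search, while the linear algebra step (ii) is unchanged and, under the assumption used throughout this subsection, takes $L^{2\beta + o(1)}$ time. As noted immediately after the definition of a $\gamma$-speedup, Grover's algorithm searches a space $E$ in time $O(\#E^{1/2})$, i.e.\ it achieves a $2$-speedup. So the algorithm of~\cite{bbm17} is precisely an instance of the setup of \Cref{gamma_speed_up} with $A$ taken to be Grover search and $\gamma = 2$.

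First I would invoke \Cref{gamma_speed_up} with $\gamma = 2$, which immediately gives an NFS runtime of $L^{\sqrt[3]{32(\gamma+1)/(9\gamma^2)} + o(1)}$. The only remaining task is arithmetic: substituting $\gamma = 2$ gives $32(\gamma+1)/(9\gamma^2) = 32 \cdot 3 / (9 \cdot 4) = 96/36 = 8/3$, so the exponent is $\sqrt[3]{8/3} \approx 1.387$, as claimed. I would also note in passing that the optimal parameters supplied by \Cref{gamma_speed_up} at $\gamma = 2$ — namely $\beta = \sqrt[3]{(1/3)^2 \cdot 3} = \sqrt[3]{1/3}$, $\epsilon = 3\beta/2$, and $\delta = \sqrt[3]{8/3}$, with $6\beta/\delta = 2(1 + 1/2) = 3 > 1$ satisfied — coincide with those reported in~\cite{bbm17}, which confirms this is a faithful restatement rather than a coincidental numerical match.

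I expect no real obstacle here: the entire content sits in \Cref{gamma_speed_up}, and this corollary is a one-line specialization. The only point worth a moment's care is checking that the running-time model for Grover's search used in~\cite{bbm17} genuinely matches the $\gamma$-speedup definition once the per-evaluation cost of the smoothness-testing circuit is folded in; but since \Cref{gamma_speed_up} already absorbs all polynomial (indeed $L^{o(1)}$) overheads into the $o(1)$ term of the exponent, this is automatic.
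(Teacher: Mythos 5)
Your proposal is correct and matches the paper's proof, which is simply ``Set $\gamma=2$ in \Cref{gamma_speed_up}''; your arithmetic $32(\gamma+1)/(9\gamma^2) = 8/3$ and the identification of Grover's search as a $2$-speedup are exactly the intended content. The extra verification of the optimal parameters $\beta$, $\epsilon$, $\delta$ at $\gamma=2$ is a harmless bonus beyond what the paper records.
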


\begin{proof}
Set $\gamma=2$ in \Cref{gamma_speed_up}.
\end{proof}

The final runtime of circuit-NFS depends on the runtime of the SAT solver used. \Cref{ecm_gamma_plot} shows the exponent $\alpha$ in the final runtime $L^{\alpha+o(1)}$ of circuit-NFS achieved if the SAT solver used achieves a $\gamma$-speedup, that is, solves a circuit with $v$ variables in $2^{v/\gamma + o(1)}$ time.

\begin{figure}[ht]
    \centering
    \includegraphics[width=.75\textwidth]{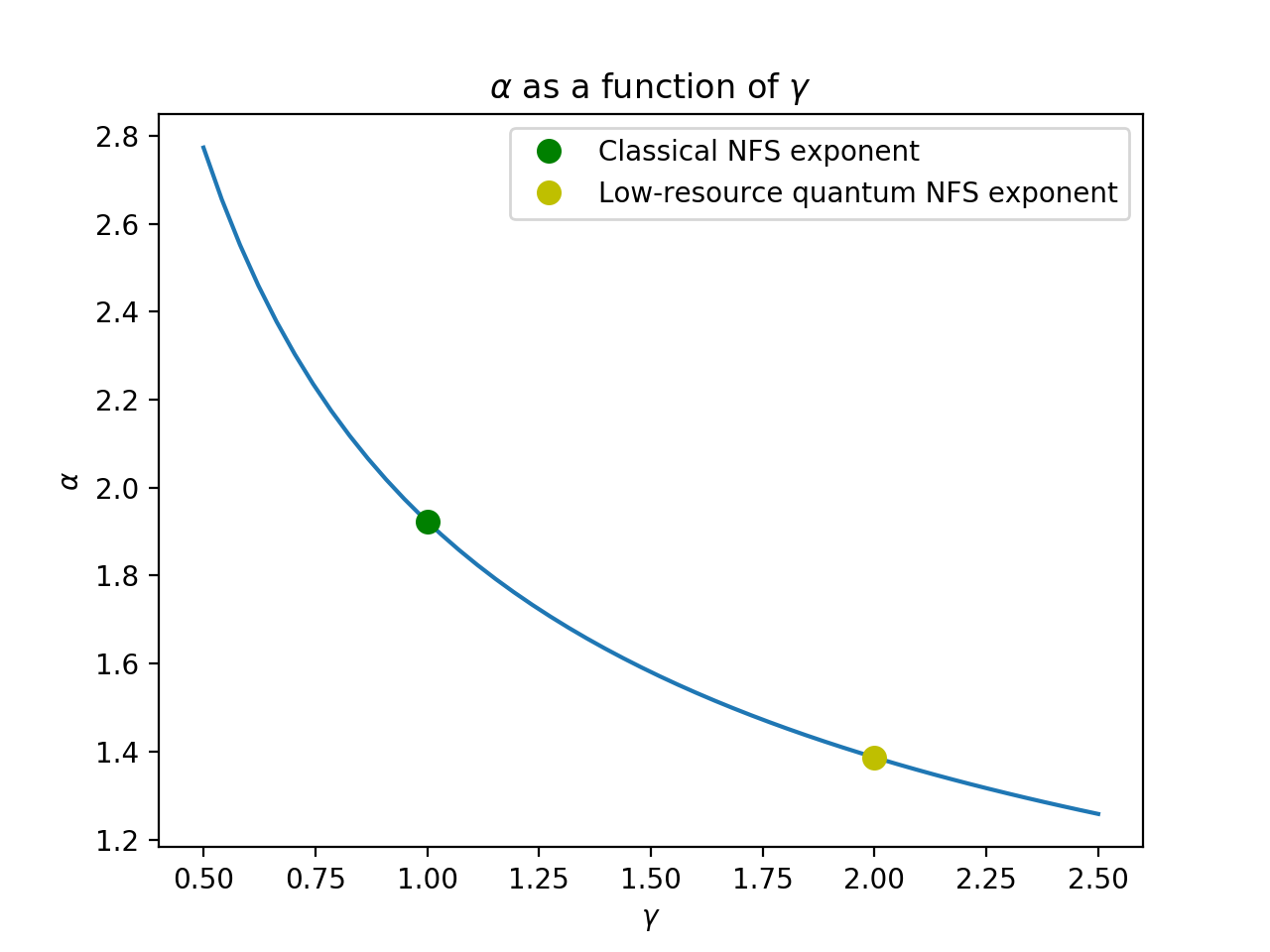}
    \caption{Exponent $\alpha$ of the final NFS runtime $L^{\alpha+o(1)}$ with the use of a SAT solver with $\gamma$-speedup. The relation between $\alpha$ and $\gamma$ is given in \Cref{gamma_speed_up}.}
    \label{ecm_gamma_plot}
\end{figure}

The following results portray the two extreme scenarios highlighted in \Cref{ecm_gamma_plot}: a classical solver with $2^{v+o(1)}$ runtime versus an ideal quantum SAT solver that achieves a $2^{v/2+o(1)}$ runtime.
The naive circuit SAT algorithm applied to the ECM circuit achieves runtime $O(2^{2\log_2 u} L_N[1/6, \cdot]) = L^{\sqrt[3]{64/9} + o(1)}$, corresponding to $\gamma = 1$.
Note that we do not expect $\gamma>2$ since $\gamma=2$ has been proved optimal for a quantum computer \cite{bbbv97}.

\begin{theorem}\label{classical_ecm_runtime}
With a classical SAT solver, one can factor an integer $N$ in $L^{\sqrt[3]{64/9} + o(1)}$ time, where $\sqrt[3]{64/9} \approx 1.923$.
\end{theorem}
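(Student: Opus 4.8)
The plan is to reduce this statement to \Cref{gamma_speed_up} (equivalently \Cref{gamma_1_corollary}) by showing that running the naive circuit-SAT algorithm on the ECM circuit realizes a $\gamma = 1$ speedup for the step of finding $T$, with all overhead absorbed into the $o(1)$ in the exponent of $L$.

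First I would recall that the naive circuit-SAT algorithm on a circuit with $v$ input variables and size $g$ runs in $O(2^v g)$ time. For the ECM circuit the only variables are $a$ and $b$, both bounded in absolute value by $u \in L^{\epsilon + o(1)}$, so $v \le 2\log_2 u + O(1)$ and $2^v \le \#U \cdot L^{o(1)} \in L^{2\epsilon + o(1)}$. Moreover, by the preceding theorem the circuit can be built with size $g \le L_N[1/6, \sqrt{2\beta/3} + o(1)]$, and since $L_N[1/6, c] = \exp(c (\log N)^{1/6}(\log \log N)^{5/6}) = L^{o(1)}$ for any constant $c$ (because $(\log N)^{1/6} = o((\log N)^{1/3})$), the circuit size contributes only a multiplicative $L^{o(1)}$ factor. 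Hence solving the ECM circuit restricted to any subset of $U$ of size $s$ takes $s \cdot L^{o(1)}$ time, which is exactly the bound a $\gamma = 1$ search algorithm achieves on a search space of size $s$.

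Next I would check correctness and the effect of the probabilistic nature of ECM. A satisfying assignment of the circuit is an $(a,b)$ for which the final ``$=1$?'' test passes, which occurs only when the ECM-and-division blocks have reduced $F(a,b)$ to $1$ using divisors that pass the implicit ``$\le y$'' check; thus every solution found yields a genuinely $y$-smooth $F(a,b)$, so there are no false positives. Conversely, by the preceding theorem the circuit, on a $y$-smooth input, fully factors it with probability $1 - o(1)$, so a $1 - o(1)$ fraction of the smooth pairs in $U$ are detected; the shortfall costs only a $1 + o(1)$ factor in the number of parts that must be searched (or can be made up with a negligible number of re-runs on fresh RAND seeds), and is absorbed into $L^{o(1)}$. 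In particular, under the same smoothness heuristic used in \Cref{gamma_speed_up} we still recover $\#T \in y^{1+o(1)} = L^{\beta + o(1)}$ smooth relations, as the NFS requires.

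Finally I would instantiate \Cref{gamma_speed_up} with $\gamma = 1$: partitioning $U$ (of size $L^{2\epsilon + o(1)}$) into $L^{\beta + o(1)}$ parts of size $L^{2\epsilon - \beta + o(1)}$, each searched by solving the ECM circuit in $L^{2\epsilon - \beta + o(1)} \cdot L^{o(1)}$ time, gives total time $L^{2\epsilon + o(1)}$ for step (i); balancing against the $L^{2\beta + o(1)}$ linear-algebra step and setting $\gamma = 1$ in the formula $L^{\sqrt[3]{32(\gamma+1)/(9\gamma^2)} + o(1)}$ yields the overall runtime $L^{\sqrt[3]{64/9} + o(1)}$, recovering \Cref{gamma_1_corollary}. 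The main thing to be careful about is precisely the bookkeeping of the previous two paragraphs: confirming that the $L_N[1/6,\cdot]$ circuit size and the $1-o(1)$ success probability genuinely vanish into the $o(1)$ of the exponent of $L$ and do not push the number of recovered smooth relations below $y^{1+o(1)}$. Once that is granted, the conclusion is immediate from \Cref{gamma_speed_up}.
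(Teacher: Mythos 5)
Your proposal is correct and follows essentially the same route as the paper: the paper justifies this theorem by noting that the naive circuit-SAT algorithm on the ECM circuit runs in $O(2^{2\log_2 u} \cdot L_N[1/6,\cdot]) = L^{\sqrt[3]{64/9}+o(1)}$, i.e.\ it realizes $\gamma = 1$ in \Cref{gamma_speed_up}, exactly as you argue. Your additional bookkeeping (absorbing the $L_N[1/6,\cdot]$ circuit size into $L^{o(1)}$, checking no false positives, and handling the $1-o(1)$ ECM success probability) is a more explicit version of what the paper leaves implicit, but it is the same argument.
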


\begin{theorem}\label{quadratic_ecm_runtime}
If a quantum SAT solver is assumed to achieve a full 2-speedup, it can be used to factor an integer N in $L^{\sqrt[3]{8/3} + o(1)}$ time, where $\sqrt[3]{8/3} \approx 1.387$.
\end{theorem}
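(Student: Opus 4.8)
The plan is to reduce the statement to \Cref{gamma_speed_up} with $\gamma=2$ by verifying that a quantum SAT solver with a full $2$-speedup, applied to the ECM circuit, searches $U$ with a $2$-speedup in the sense of the definition preceding \Cref{gamma_speed_up}. First I would recall the theorem above on the ECM circuit: it can be built with size $g \in L_N[1/6, \sqrt{2\beta/3}+o(1)]$, and on every $(a,b) \in U$ it decides the predicate ``$F(a,b)$ is $y$-smooth'' with probability $1-o(1)$. Its error is one-sided: the terminal ``$=1$?'' test accepts only when the extracted divisors, each checked to be $\le y$, multiply back to $F(a,b)$ — and any product of integers $\le y$ is automatically $y$-smooth, so there are no false positives; the sole failure mode is a missed smooth number, which is exactly what the $1-o(1)$ bound controls. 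The key numerical observation is that $L_N[1/6,\cdot]\subseteq L^{o(1)}$, because $(\log N)^{1/6}(\log\log N)^{5/6}=o\!\left((\log N)^{1/3}(\log\log N)^{2/3}\right)$; hence the circuit size enters any runtime only as an $L^{o(1)}$ factor.

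Next I would count variables. The circuit's only inputs are $a$ and $b$, in $v=2\ceil{\log_2 u}$ bits, with $|a|\le u$, $0<b\le u$ enforced by the encoding and $\gcd\{a,b\}>1$ left to post-processing as in~\cite{bbm17}. So the search space has size $2^v\in u^{2+o(1)}=L^{2\epsilon+o(1)}$, which matches $\#U\in L^{2\epsilon+o(1)}$ up to the $o(1)$. By hypothesis the quantum SAT solver solves the circuit in $2^{v/2+o(1)}\,g = u^{1+o(1)}L^{o(1)}=L^{\epsilon+o(1)}$ time; equivalently, it searches $U$ in time $(\#U)^{1/2+o(1)}$, i.e.\ it achieves a $2$-speedup. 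As in the proof of \Cref{gamma_speed_up}, one partitions $U$ into $L^{\beta+o(1)}$ parts of size $L^{2\epsilon-\beta+o(1)}$, each holding $L^{o(1)}$ smooth pairs with overwhelming probability, and runs the solver on each part — repeating each call a sub-polynomial number of times (equivalently, enlarging $\Delta(N)$ as in the preceding proof) so that the $o(1)$ per-instance error is negligible in $\#T$. This produces the needed $y^{1+o(1)}$ smooth values of $F(a,b)$ in total time $L^{2\epsilon/2+\beta(1-1/2)+o(1)}=L^{\epsilon+\beta/2+o(1)}$.

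Finally, I would invoke \Cref{gamma_speed_up} verbatim with $\gamma=2$ — exactly the regime of \Cref{gamma_2_corolary}: balancing against the $L^{2\beta+o(1)}$ linear-algebra step gives $\epsilon=\tfrac{3}{2}\beta$, and optimizing over $\delta$ yields total runtime $L^{\sqrt[3]{32(\gamma+1)/(9\gamma^2)}+o(1)}=L^{\sqrt[3]{96/36}+o(1)}=L^{\sqrt[3]{8/3}+o(1)}$. I do not anticipate a genuine obstacle: the substance already lives in the two preceding theorems, and what remains is bookkeeping — confirming that the $L_N[1/6,\cdot]$ circuit size is absorbed into $L^{o(1)}$, and rephrasing ``solves a $v$-variable circuit in $2^{v/2+o(1)}$ time'' as a $2$-speedup on $U$ so that \Cref{gamma_speed_up} applies without change. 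The one point deserving care is the probabilistic nature of the ECM circuit, which is handled as above via its one-sided error together with a harmless number of repetitions.
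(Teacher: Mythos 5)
Your proposal is correct and follows exactly the route the paper intends (the paper states this theorem without an explicit proof, relying on the preceding ECM-circuit size theorem and \Cref{gamma_speed_up} with $\gamma=2$, just as you do). Your added bookkeeping — absorbing the $L_N[1/6,\cdot]$ circuit size into $L^{o(1)}$, counting the $2\log_2 u$ input variables, and handling the one-sided $1-o(1)$ success probability by repetition — only makes the same argument more explicit.
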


\Cref{classical_ecm_runtime} is not an improvement on the classical NFS, but it shows that the circuit-NFS approach is asymptotically at least as good. Under the assumption that quantum annealears can achieve the aforementioned 2-speedup in solving SAT circuits, one can obtain the same asymptotic runtime as the low-resource quantum algorithm. However, this does not require a fault-tolerant quantum computer capable of running Grover's algorithm. 

It is harder to make a statement about the qubit requirement of circuit-NFS.
Instead of SAT, one can reduce to other NP-hard problems like QUBO
for more direct application of DWave’s quantum annealer.
If the smoothness detection circuit could be simplified and
written as an instance of QUBO in terms of the variables $a, b$ only,
that would total $2 \log u \in (\log N)^{1/3+o(1)}$ qubits.
However, simplification is not trivial and does not seem to come without overhead, given our preliminary tests.
It is more likely that intermediate wires of the circuit would also have to be QUBO variables,
increasing the qubit requirement up to the full circuit size $L_N[1/6, \sqrt{2\beta/3} + o(1)]$.
Therefore it remains an open question how many annealing qubits circuit-NFS requires.
On the other hand, annealing qubits are currently produced in much higher quantity than other types of qubits,
suggesting the possibility that circuit-NFS could be implemented sooner than the low-resource quantum NFS.

\section{Conclusion}\label{conclusion}

A potential speedup to integer factorization comes from replacing the
search for smooth numbers in the NFS by finding those numbers using a SAT solver.
This requires solving a circuit that detects if $F(a,b)$ is smooth upon
input $a$ and $b$.
Two natural circuits for that task are
the circuit with variable exponents of \Cref{fig:fullproduct} which explicitly
lists all primes that can be factors
and the circuit with variable factors of \Cref{fig:varprime} which relaxes
the requirement that these factors are prime.
Both have too many input wires for any exponential-time SAT solver to
provide any asymptotic speedup over brute-force search.

Despite the exponential upper bound on the runtime of SAT solvers, practical solvers
are known to perform well on certain problems by picking up on patterns in the problem instances.
One could hope that a speedup over the theoretical upper bound is therefore achieved in practice on these particular circuits,
although this speedup would have to be superpolynomial in order to result in more efficient integer factorization.
Benchmarks on the variable exponents circuit suggest that no such speedup is realized in practice.


The circuit-NFS algorithm is specialized to the smoothness detection problem in the sense that the ECM performs well for finding small factors. Our algorithm has the same asymptotic runtime as the classical NFS. 
Measurements of solving smoothness detection circuits however indicate that there is a massive overhead to this approach. Any speedup in SAT solving (be it quantum or classical) needs to make up for this overhead before resulting in a speedup for factoring. Still, if the overhead is only constant then any $\gamma$-speedup will eventually be sufficient.
Given a quantum annealer that solves SAT instances with any speedup over classical search, circuit-NFS performs asymptotically better than the classical NFS. If a full quadratic speedup is attained, circuit-NFS achieves the asymptotic time complexity of the low-resource quantum NFS, while perhaps not requiring a fault-tolerant quantum computer (depending on the quantum SAT solving device).

Open problems remain, such as benchmarking circuit-NFS on the ECM circuit and estimating its quantum resource requirements.

\printbibliography

\end{document}